\newcommand{\abs}[1]{\left|#1\right|}
\newcommand{\pr}{{\mathrm \!pr}}
\newcommand{\I}{\mathcal I}
\newcommand{\IC}{\mathrm{\sc{PEN}}}
\DeclareMathOperator*{\argmin}{arg\,min}
\DeclareMathOperator*{\argmax}{arg\,max}
\DeclareMathOperator{\Tr}{Tr}
\newtheorem{theorem}{Theorem}
\newtheorem{definition}{Definition}
\newtheorem{proof}{Proof}
\begin{document}
\title{Seeded Binary Segmentation: A general methodology for fast and optimal change point detection}

\author{Solt Kov\'acs${}^{1}$, Housen Li${}^{2}$, Peter B\"uhlmann${}^{1}$, Axel Munk${}^{2}$\\
\vspace{0.1cm}\\
{\small${}^{1}$Seminar for Statistics, ETH Zurich, Switzerland}\\
{\small${}^{2}$Institute for Mathematical Stochastics, University of G\"ottingen, Germany}}

\date{February 2020}

\maketitle

\begin{abstract}
In recent years, there has been an increasing demand on efficient algorithms for large scale change point detection problems. To this end, we propose seeded binary segmentation, an approach relying on a deterministic construction of background intervals, called seeded intervals, in which single change points are searched. The final selection of change points based on the candidates from seeded intervals can be done in various ways, adapted to the problem at hand. Thus, seeded binary segmentation is easy to adapt to a wide range of change point detection problems, let that be univariate, multivariate or even high-dimensional. 

We consider the univariate Gaussian change in mean setup in detail. For this specific case we show that seeded binary segmentation leads to a near-linear time approach (i.e. linear up to a logarithmic factor) independent of the underlying number of change points. Furthermore, using appropriate selection methods, the methodology is shown to be asymptotically minimax optimal. While computationally more efficient, the finite sample  estimation performance remains competitive compared to state of the art procedures. Moreover, we illustrate the methodology for high-dimensional settings with an inverse covariance change point detection problem where our proposal leads to massive computational gains while still exhibiting good statistical performance.
\end{abstract}

\noindent\textbf{Keywords:}
Binary segmentation;
Break points;
Fast computation;
High-dimensional;
Minimax optimality;
Multiple change point estimation;
Narrowest over threshold;
Wild binary segmentation.

\section{Introduction and Motivation}
\label{Introduction}
Change point detection refers to the problem of estimating the location of abrupt structural changes for data that are ordered e.g.~in time, space or by the genome sequence. One can distinguish between online (sequential) and offline (retrospective) change point detection. We are interested in the latter setup, where one has a set of ordered observations and the data in the segments between the change points (also called break points) are assumed to be coming from some common underlying distribution. The goal is to estimate both the location and the number of the change points. Applications cover areas in biology (e.g.~copy number variation in \citet{CBS}; or ion channels in \citet{Hotz_ionchannel}), finance \citep{Kim_finance}, environmental science (e.g.~climate data in \citet{Reeves_climatology}; or environmental monitoring systems in \citet{Londschien}) and many more.

There are two common approaches to tackle change point detection algorithmically: Modifications of dynamic programming (see e.g.~\cite{FKLW08,BoyKemLieMunWit09,Killick_etal}) aim to solve a global optimisation problem, which often lead to statistically efficient methods even in a multiscale manner \citep{Frick_etal};  Greedy procedures provide a heuristic to those (see e.g.~\citet{FJS19}) and are often based on binary segmentation \citep{Vostrikova}. Plain binary segmentation is usually fast and easy to adapt to various change point detection scenarios, but it is not optimal in terms of statistical estimation. \cite{Fryzlewicz_WBS} proposed wild binary segmentation and \cite{Baranowski} proposed the similar narrowest over threshold method which improve on statistical detection, but lose some of the computational efficiency of plain binary segmentation.
Both approaches draw a high number of random search intervals and in each the best split point is found greedily, leading to a set of candidate split points. The two methods differ in the way the final change point estimates are selected from these candidates.
The computational time for both essentially depends on the total length (and hence the number) of the drawn intervals. In frequent change point scenarios one should draw a higher number of random intervals both from a theoretical and practical perspective. The problems that arise are the following. In practice one may not know whether the number of drawn search intervals is sufficient and in the worst case, when essentially all possible segments are drawn, the total length of considered intervals will be cubic. Hence, as one considers all possible split points within the drawn intervals, going through all of them can already result in a computational cost that is cubic (e.g.~in univariate Gaussian change in mean setups) or possibly even worse if some complicated model fits are used for the investigated problem.

We propose seeded binary segmentation as a generic approach for fast and flexible change point detection in large scale problems. Our approach is similar to wild binary segmentation and the narrowest over threshold method in the sense that for many different search intervals the best split point for each is first determined greedily, in order to form a preliminary set of candidates out of which the final selection can be made. However, we propose a deterministic construction of search intervals which can be pre-computed. The resulting seeded intervals allow computational efficiency, i.e. lead to a total length for the search intervals that is linear up to a logarithmic factor (and thus we will refer to it as near-linear), independent of the underlying number of change points, while keeping the improvements in terms of statistical estimation compared to standard binary segmentation. Figure \ref{wbs_vs_sbs_computational_time} illustrates the statistical efficiency vs.~computational efficiency of seeded binary segmentation compared to wild binary segmentation for a univariate signal. It is remarkable that the estimation performance of the default option suggested for seeded binary segmentation is very close to what seems to be possible in this specific setup. Hence, our approach not only uses available resources more efficiently, but final results are also much more robust across the number of generated seeded intervals. 

\begin{figure}[h]
\vspace*{-30pt}
\includegraphics[width=1\textwidth]{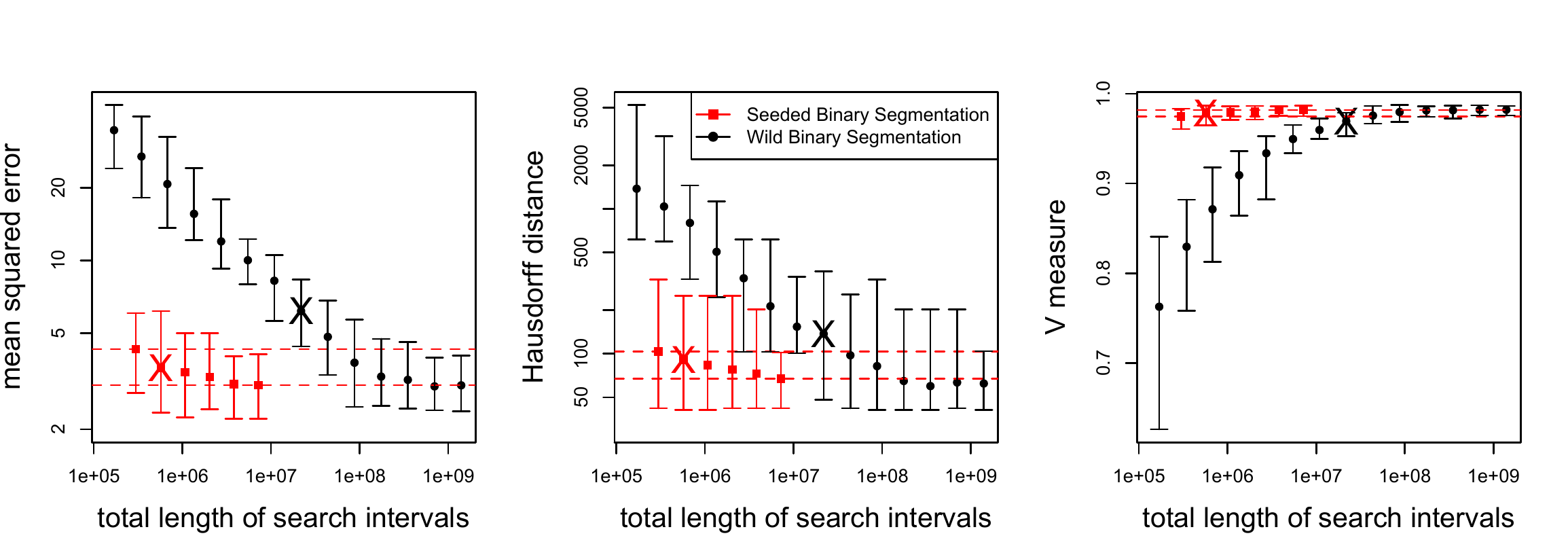}
\caption{Estimation performance on a five times repeated blocks signal \citep{DoJo94} with $5\cdot2048$ observations and $55$ change points. The dots and squares represent the average and the error bars the range of obtained values across $50$ simulations. Crosses correspond to the proposed default options for the methods.
The dashed horizontal lines are drawn as a visual aid. Seeded binary segmentation was combined with a greedy selection and the final models were chosen using the same information criterion as for wild binary segmentation. Note the logarithmic scales on both axes. More details on the blocks signal and the error measures can be found in section~\ref{Simulations}. }
\label{wbs_vs_sbs_computational_time}
\end{figure}

Two very useful applications for seeded binary segmentation in connection with ever larger data sets are in the following setups. First, applications with very long univariate Gaussian or similar low-dimensional signals with possibly hundreds of thousands of observations led to the need and thus emergence of computationally more efficient approaches. Based on dynamic programming, \cite{Killick_etal} proposed the pruned exact linear time method that is returning the optimal partitioning in linear time in frequent change point scenarios when pruning works well, but remains of quadratic complexity in the worst case (with a few change points only). \cite{Maidstone} proposed the functional pruning optimal partitioning method and demonstrated faster empirical performance compared to the pruned exact linear time method in many scenarios. \cite{Fryzlewicz_WBS2} recently proposed wild binary segmentation~2 for a univariate Gaussian change in mean setup, a hybrid between classical binary segmentation and wild binary segmentation aiming to improve the computational performance of the latter. In all these setups, seeded binary segmentation would lead to a competitive approach with guaranteed near-linear cost independent of the underlying number of change points as opposed to the previously mentioned methods (including even plain binary segmentation) which can all have a quadratic computational complexity in certain method-specific adverse scenarios. Note that dynamic programming would typically run fast in frequent change point scenarios while wild binary segmentation would run fast in the presence of only a few change points, but neither of them are guaranteed to be fast under all circumstances. 

However, a second, and to us even more appealing applicability of seeded binary segmentation is in setups where model fits are expensive. Several examples, especially in high-dimensional change point detection settings, are discussed at the beginning of section \ref{highdim_example}. The commonality among these setups is that they usually require computationally costly single fits and thus it is essential for practical feasibility to keep the total length of search intervals as small as possible. We will discuss this in more detail and using a concrete example in the last section.

The structure of the paper is as follows. In the next section we introduce seeded binary segmentation in more detail, background and with more discussion (e.g.~details regarding computational times). For the ease of explanation we do this in the classical setup of Gaussian variables with changing mean for which section~\ref{Theory} contains theoretical results regarding consistency and localisation error. Section~\ref{Simulations} provides simulation results which demonstrate competitive performance compared to other state-of-the-art estimation methods. As noted previously, the applicability of the seeded binary segmentation methodology is very wide and we illustrate this with a high-dimensional example in section \ref{highdim_example}.

\section{On the way to Seeded Binary Segmentation}
\label{SBS_Intro}
A well-studied problem, which we will analyse in detail as an illustrative example of the methodology, is the detection of shifts in mean of independent univariate Gaussian variables with variance that stays constant over time. Recent proposals for this setup are for example that of \cite{Frick_etal}, \cite{Fryzlewicz_WBS}, \cite{Du2016}, \cite{LMS16} and \cite{FJS19}, references therein also provide an overview. The understanding for this piecewise constant Gaussian setup can often be generalised to other scenarios, e.g.~for multivariate and dependent, for different parametric as well as non-parametric models, or even changes in piecewise linear structure. Hence, while focusing on this classical model for illustrative purposes, we would like to emphasise again the much wider potential applicability of our methodology (which we will come back to in the last section).

\subsection{The Gaussian setup and notation}
We consider the model
\begin{equation}\label{eq:model}
    X_t = f_t + \epsilon_t, \hspace{0.5cm} t = 1, \ldots, T,
\end{equation}
where $f_t$ is a fixed one-dimensional, piecewise constant signal with an unknown number $N$ of change points at unknown locations $\eta_1 < \dots < \eta_N \in \{2,\dots,T-1\}$. The noise $\epsilon_t$ is assumed to be independent and identically distributed Gaussians with zero mean and variance $\sigma^2$. Without loss of generality, the variance can be assumed to be 1. The test statistic used for detection by binary segmentation and related procedures is the CUSUM statistics \citep{Page54} defined for a split point $s$ within the segment $(\ell, r]$ as 
\begin{equation}
\label{CUSUM}
    T_{(\ell,r]}(s)= \sqrt{\frac{r-s}{n(s-\ell+1)}}\sum_{t=\ell+1}^s X_t - 
        \sqrt{\frac{s-\ell+1}{n(r-s)}}\sum_{t=s+1}^r X_t, 
\end{equation}
with integers $0\leq \ell<r\leq T$ and $n = r-\ell$. The CUSUM statistic is the generalised likelihood ratio test for a single change point at location $s$ in the interval $(\ell,r]$ against a constant signal. The value $|T_{(\ell,r]}(s)|$ will be referred to as gain, because the square of it gives the reduction in squared errors in the segment $(\ell,r]$ when splitting at point $s$ and fitting a separate mean on the corresponding left and right segments. The location of the maximal absolute CUSUM statistics 
\begin{equation*}
    \hat s_{(\ell,r]} = \argmax_{s\in{\{\ell+1, \dots, r-1\}}} |T_{(\ell,r]}(s)|
\end{equation*}
is the best split point and thus a candidate when dividing the segment into two parts. 
Note that in scenarios other than the Gaussian setup from equation (\ref{eq:model}) one would only need to change the CUSUM statistic to one that is adapted to the problem at hand. Whether one declares $\hat s_{(\ell,r]}$ to be a change point usually depends on the corresponding maximal gains 
$|T_{(\ell,r]}(\hat s_{(\ell,r]})|$.
We treat the final selection of change points as a separate issue and will discuss specific ways to do so later on.

\subsection{Related methods based on binary segmentation}
Binary segmentation \citep{Vostrikova} first finds the best split $\hat s_{(0,T]}$ on all data and if $|T_{(0,T]}(\hat s_{(0,T]})|$ is above a pre-defined threshold, then searches for other split points on the left and right segments $(0,\hat s_{(0,T]}]$ and $(\hat s_{(0,T]},T]$. The same iterative search is continued as long as the gains are still above the pre-defined threshold and there are sufficiently many observations in the resulting segments.

The wild binary segmentation method \citep{Fryzlewicz_WBS} and the narrowest over threshold method \citep{Baranowski} both generate $M$ random intervals $I_m = (\ell_m, r_m]$, $m=1, \ldots, M$, where the starting and end points are drawn independently, uniformly and with replacement from $\{0, 1, \ldots, T \}$, possibly subject to some minimal length constraint to be able to evaluate the chosen test statistic. One determines candidate splits and corresponding maximal gains for each of the random intervals via the CUSUM statistics defined in equation (\ref{CUSUM}). The two methods are slightly different beyond this point, in the way to select the final change point estimates.

Wild binary segmentation orders the candidates according to the respective maximal gains (i.e. values of the test statistic) and selects greedily, i.e. the split point with highest value is the first change point estimate. Then all intervals are eliminated from the list which contain this found change point. From the remaining candidates again the one with maximal gain is selected, followed by the elimination of all intervals that contain this second one. The procedure is repeated as long as the split with highest gain remains above a certain pre-defined threshold. A suitable threshold can be derived from asymptotic theory for example. Out of several possible segmentations corresponding to different thresholds, one can also select the one optimising a chosen information criterion (e.g.~proposed by \cite{mBIC_Zhang}).

The narrowest over threshold method selects change points amongst the candidates that have a maximal gain above a pre-defined threshold. Amongst these candidates the one having the shortest (i.e. narrowest) background interval is selected first. Then all intervals that contain this first estimate are eliminated and repeatedly from the left over ones above the pre-defined threshold the next one with the currently shortest background is selected. The procedure is repeated as long as no more maximal gain is above the pre-defined threshold. It is also possible to repeat this procedure with differing thresholds and then take the one optimising some information criterion. Note that the narrowest over threshold way of selecting the final estimates makes it applicable in more general scenarios, where the model is e.g.~piecewise linear.

Elimination of the intervals in each step for both methods is necessary in order to avoid detecting the same change point at slightly different locations for several different background intervals. The two methods allow for detection in a larger class of signals compared to binary segmentation. The intuition behind this is that some of the random intervals will contain only a single change point sufficiently well separated from the boundaries and thus serve as good local backgrounds in which change points are easily detectable. As a comparison, binary segmentation is particularly bad in signals that exhibit many up and down jumps that cancel each other if a background contains several change points. 

The main drawback of wild binary segmentation and the narrowest over threshold method is the increased computational cost. This grows linearly with the total length of used random intervals (and with $O(T)$ expected length for each random interval). In frequent change point scenarios one should draw a higher number of random intervals in order to ensure that there are still sufficiently many intervals with a single change point in them. In the worst case, when drawing essentially all possible intervals, the complexity is cubic, which is clearly not satisfactory for large scale change point detection problems. As mentioned previously, it is also an issue that one might not know how many change points to expect and thus how many intervals to draw. \cite{Fryzlewicz_WBS2} summarises these issues as the lack of computational adaptivity of wild binary segmentation and proposes a modification, called wild binary segmentation~2, which is a hybrid between wild and classical binary segmentation in the following sense. In order to speed up, he suggests to draw a smaller number $M' << M$ of random intervals, localise the first candidate greedily, and in the resulting left and right segments draw again $M'$ random intervals, localise again, and repeat iteratively. From the resulting preliminary list of candidates the final estimates are again determined based on certain maximal gains. However, this proposal is fairly specific to the univariate Gaussian setup and might fail when adapting to high-dimensional or other more complicated scenarios such as a model that is piecewise linear. An interesting two-stage procedure also targeting computational issues for the univariate Gaussian case was proposed by \cite{intelligent_sampling}. They obtain pilot estimates of change points on a sparse subsample and then refine these estimates in a second step in the respective neighbourhoods using dense sampling. This can reduce the computational cost to sub-linear, and is thus useful for extremely long time series. However, note that computational issues in high-dimensional setups primarily arise from the fact that data sets are too `wide' and not because of being `long'. Moreover, there is a price to pay in terms of detectable changes using this procedure, namely, being statistically suboptimal.

Next, we propose our conceptually simple seeded binary segmentation method that has a number of advantages. To name a few, it leads to a near-linear time algorithm under all circumstances, independent of the number of change points, it simplifies theoretical analysis and has an applicability wider than the Gaussian scenario.

\subsection{Seeded Binary Segmentation}
The question to start with regarding the randomly generated intervals above is, whether all of them are necessary? When the starting and end points are drawn uniformly at random, there will be many intervals with length of the order of the original series (see bottom of Fig.~\ref{interval_visualization}). While these are one of the main drivers of computational time, if there are many change points, such long intervals are often not really useful as they contain several change points. One should rather focus on shorter intervals. As an example, the total length of random intervals in the bottom of Fig.~\ref{interval_visualization} is approximately the same as for the seeded intervals shown on the top of Fig.~\ref{interval_visualization}. While the shown random intervals only contain a few short intervals and also not covering short intervals uniformly, the seeded intervals achieve this much better. Formally, we propose the following deterministic construction guaranteeing a good coverage of single change points.

\begin{definition}[Seeded intervals]
\label{def:seeded_intervals}
Let $a\in[1/2,1)$ denote a given decay parameter. For $1 \leq k \leq \lceil\log_{1/a}(T)\rceil$ (i.e.~logarithm with base $1/a$) define the $k$-th layer as the collection of $n_k$ intervals of initial length $l_k$ that are evenly shifted by the deterministic shift $s_k$ as follows:
\begin{equation*}
    \I_k = \bigcup\limits_{i=1}^{n_k} \{ (\lfloor (i-1)s_k \rfloor,
        \lceil(i-1)s_k + l_k \rceil ] \},  
\end{equation*}
where
$n_k = 2\lceil (1/a)^{k-1} \rceil - 1$,  
$l_k  = Ta^{k-1}$ and
$s_k  = (T-l_k)/(n_k-1)$.
The overall collection of seeded intervals is
\begin{equation*}
    \I = \bigcup\limits_{k=1}^{\lceil\log_{1/a}(T)\rceil} \I_k.
\end{equation*}
\end{definition}
Some technical remarks are due first. Due to rounding to integer grid values, the effective lengths and shifts are in practice slightly different from $l_k$ and $s_k$. Moreover, some intervals might appear multiple times on small scales (depending on choice of $a$). Such redundant ones can be eliminated. 
Using some rough bounds, one easily obtains a near-linear bound for the total length of all the intervals contained in $\I$: $\sum_{k=1}^{\lceil\log_{1/a}(T)\rceil} \sum_{i=1}^{n_k} \lceil l_k \rceil \leq 6T\lceil\log_{1/a}(T)\rceil$.
Thus, evaluating the CUSUM statistics for all of these intervals also has a computational cost of the order $O(T\log_{1/a}(T))$. Note that one can simply stop at an earlier layer than the $\lceil\log_{1/a}(T)\rceil$-th one if a certain minimal segment length is desirable (e.g.~in high-dimensional scenarios) or when prior knowledge on the true minimal segment length is available. 

While the lengths of intervals decay rapidly over the layers, the number of intervals is rapidly increasing as the scales become smaller. In particular, at the lowest layer the total number of intervals is of the same order as the total number of observations, i.e. for $k = \lceil\log_{1/a}(T)\rceil$, $\abs{\I_k}$ is of the order $O(T)$. This allows a good coverage even in frequent change point scenarios. Hence, in some sense, the number of intervals is automatically adapted to each scale. Figure~\ref{interval_visualization} shows the first few layers of two collections of seeded intervals as well as some random intervals for a comparison. The key idea is best understood for $a=1/2$ (top of Fig.~\ref{interval_visualization}). Here one obtains an overlapping dyadic partition. Each interval from the previous layer is split into a left, a right and an overlapping middle interval, each of half the size of the previous layer. The middle plot of Fig.~\ref{interval_visualization} shows seeded intervals of slower decay, namely $1/2^{1/2}$. Here the solid lines in layers $\I_1, \I_3$ and $\I_5$ show intervals that were also included in the interval system for $a=1/2$ (top), while the dashed lines are additionally obtained intervals. These intervals on the intermediate layers $\I_2$ and $\I_4$ are still evenly spread. In particular, a sufficiently big overlap amongst the consecutive intervals within the same layer is guaranteed. This is key, because one obtains for each change point background intervals with only that single change point in there and such that the change point is not too close to the boundaries, which facilitates detection. We recommend using $a=1/2^{1/2}$ as a decay in practice, but as indicated above, adding intermediate layers is very easy when trying to achieve better finite sample performance in very challenging scenarios. 

\begin{figure}[t]
\includegraphics[width=0.95\textwidth]{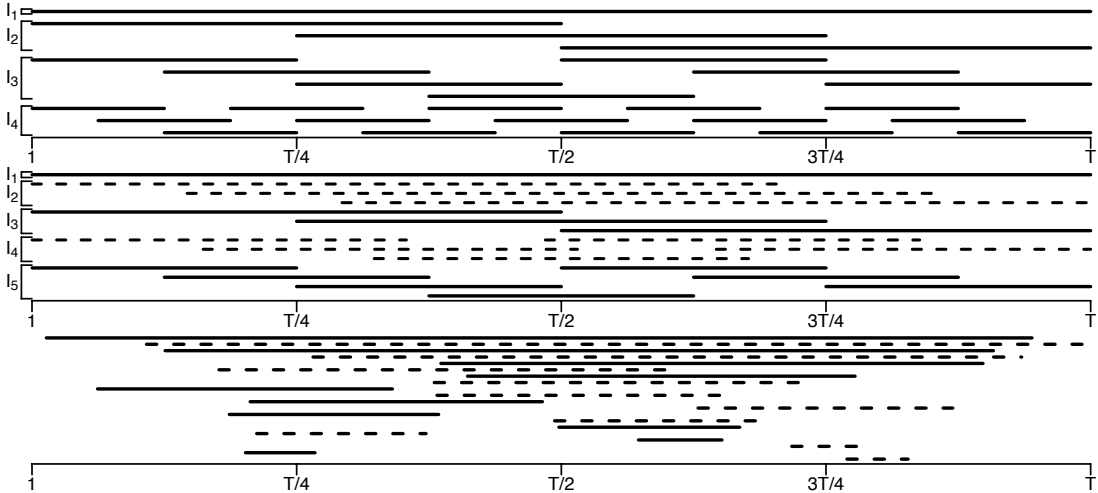}
\caption{The first four (with $a=1/2$, top) and first five layers (with $a=(1/2)^{1/2}$, middle) of seeded intervals, as well as $20$ random intervals ordered according to length (bottom). The division into dashed and solid lines has no particular meaning, it is only for visual aid.}
\label{interval_visualization}
\end{figure}

Several related interval constructions were proposed by Guenther Walther together with various co-authors. First proposed for the detection of spatial clusters \citep{Walther}, as well as density estimation \citep{Rufibach_Walther}, these proposals were then gradually adapted towards more classical change point detection (or fairly related problems) by \cite{Rivera_Walther, Chan_Walther2013, Chan_Walther2015}. The difference between these various intervals and seeded intervals mainly lies in the specific choices of intervals for each scale. \cite{Pein_etal} also considered dyadic partitions, but as a major difference, without an overlap, moreover, with no flexibility in decay. Recently, yet another interval system was proposed by \cite{Chan_Chen2017} in the context of bottom up approaches. However, these specific interval construction proposals appeared as tricks and side remarks rather than pronouncing the potential of a stand-alone flexible methodology. We believe our construction is more intuitive, in particular showing the trade-offs between computational time and statistical precision.

Algorithm \ref{alg:seeded_binary_segmentation} describes our seeded binary segmentation. The decay parameter $a$ governs the trade-off between computational time and expected estimation performance. Limiting the minimal number of observations $m$ is reasonable e.g.~in high-dimensional scenarios or if prior information on the minimal true segment lengths is available. Choosing a larger value not only speeds up computations, but also reduces the risk of selecting false positives coming from very small intervals during the final selection. We find it important to disentangle the construction of search intervals from the final selection as various options with advantages and disadvantages arise. Some possible selection methods are discussed in section \ref{selection_method}. 

\begin{algorithm}[!h]
\vspace{-0.3cm}
\caption{Seeded Binary Segmentation}
\label{alg:seeded_binary_segmentation}
\begin{tabbing}
   \enspace Require \\
   \qquad a decay parameter $a\in[1/2,1)$, 
   a minimal segment length $m \ge 2$ and 
   a selection method. \\
   \enspace Create a collection $\I$ of seeded intervals with decay $a$ from Definition \ref{def:seeded_intervals}, which cover $\ge m$ observations.  \\
   \enspace For $i=1$ to $|\I|$ \\
   \qquad Take the $i$-th interval in $\I$ and denote its boundaries with $\ell$ and $r$. \\
   \qquad Calculate the CUSUM statistics $T_{(\ell,r]}(s)$ from equation (\ref{CUSUM}) for $s = \ell,\ldots, r-1$.\\
\enspace Apply the chosen selection method to         $T_{(\ell,r]}(\cdot), (\ell,r] \in \I$ to output the final change point estimates.
\end{tabbing}
\end{algorithm}

Overall, the key components for our seeded interval construction are 1) a sufficiently fast decay such that the total length of all intervals is near-linear, 2) choosing a larger number of intervals on the small scales and thus suitably adapting for the possibility that there can be many more small segments than large ones, and last, but not least 3) guaranteeing a sufficient overlap amongst intervals on the same layer. The choice of $a$ can be interpreted as a trade-off between computational time and expected performance. 

\subsection{Selection methods}
\label{selection_method}
While mentioned earlier, we recall some possible selection methods here. Greedy selection takes the location of the maximal CUSUM value ($\max_{(\ell,r]\in \I} |T_{(\ell,r]}(\hat s_{(\ell,r]})|$) as the change point candidate, then eliminates from $\I$ all intervals that contain this candidate, and repeats these two steps as long as there are some intervals left. From the list of these candidates all above a certain threshold (typically some constant times $\log^{1/2} T$) are kept. Alternatively, one can also investigate various thresholds, each leading to a different segmentation, and at the end select the segmentation that optimises some information criterion. 

The narrowest over threshold selection takes all intervals $(\ell,r]\in \I$ for which $T_{(\ell,r]}(\hat s_{(\ell,r]})$ is over a pre-defined threshold (typically some constant times $\log^{1/2} T$). Out of these it takes the shortest (narrowest) interval $(\ell^*,r^*]$, i.e. the one for which $r^*-\ell^*\leq r-\ell$ for all intervals $(\ell,r]$ with maximal CUSUM value above the threshold. Then it eliminates all intervals that contain the found change point candidate $\hat s_{(\ell^*,r^*]}$, and repeats these two steps as long as there are no more intervals left with a maximal CUSUM statistics above the pre-defined threshold. While raising the threshold tends to result in fewer change points, due to the narrowest selection followed by the elimination step, it could also happen that one obtains more change points. Hence, in some scenarios this selection method can be somewhat unstable, as even similar thresholds can lead to segmentations differing both with respect to number and location of the boundaries. It is also possible to evaluate a set of different segmentations arising from different thresholds in terms of some information criterion and then pick the overall best.

Various other options, e.g.~the steepest drop to low levels proposed by \cite{Fryzlewicz_WBS2} or selection thresholds guaranteeing false positive control either via theoretical approximation results \citep{FJS19} or via simulations (facilitated by the favourable scaling of the total length of seeded intervals of $O(T \log T)$) could be adapted as well. Besides the choice of $a$, the overall computational costs also depend on how complicated the final model selection is. Greedy selection for example is very fast, while full solution paths (using all possible thresholds) of the narrowest over threshold method can be slow. Whether this becomes an issue depends on the problem at hand. In high-dimensional setups the expensive model fits would typically dominate the cost of model selection anyway. To make this more formal, consider the following theorems.

\begin{theorem}\label{th:fixThd}
Assume model~\eqref{eq:model} and that in seeded binary segmentation (i.e., Algorithm~\ref{alg:seeded_binary_segmentation}) we perform either the greedy selection or the narrowest over threshold selection, with an arbitrarily fixed single threshold. Then the computational complexity of seeded binary segmentation is $O(T \log T)$  and the memory complexity is $O(T)$.
\end{theorem}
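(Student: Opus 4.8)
The plan is to bound separately the two phases of Algorithm~\ref{alg:seeded_binary_segmentation}: (i) generating the seeded intervals and evaluating all CUSUM statistics on them, and (ii) running the chosen selection rule on the resulting list of candidates. The first phase is where the near-linear cost is simply inherited from the interval construction, while the second phase is where a literal implementation would be too slow and some algorithmic care is needed.

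For phase (i), I would first invoke the total-length bound accompanying Definition~\ref{def:seeded_intervals}, namely $\sum_{k}\sum_{i}\lceil l_k\rceil \le 6T\lceil\log_{1/a}(T)\rceil = O(T\log T)$, using that $a$ is a fixed constant so $\log_{1/a}T = \log T/\log(1/a)$. Precomputing the global cumulative sums $S_t=\sum_{j=1}^t X_j$ costs $O(T)$ time and memory, after which for any interval $(\ell,r]$ each partial sum equals $S_s-S_\ell$ and is available in $O(1)$. Hence a single sweep over $s=\ell+1,\dots,r-1$ evaluates all CUSUM statistics and keeps the running maximum $\abs{T_{(\ell,r]}(\hat s_{(\ell,r]})}$ together with its location $\hat s_{(\ell,r]}$, in $O(r-\ell)$ time and only $O(1)$ extra memory, since the individual values need not be stored. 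Summing over all intervals, phase (i) costs $O(T\log T)$ time. For the memory accounting I would add two book-keeping facts: the number of intervals is $\abs{\I}=\sum_k n_k = O(T)$ via the geometric series $\sum_k (1/a)^{k-1}$, and for each interval we retain only its endpoints, its maximal gain and its argmax, i.e.\ $O(\abs{\I})=O(T)$ storage.

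The main obstacle is phase (ii): taken literally, both selection rules repeatedly rescan the whole candidate list, which is $\Theta(T^2)$ in the worst case. The key observation is that the elimination step is equivalent to a membership test, since an interval is eliminated precisely when it already contains a previously chosen change point. For the greedy rule I would therefore sort the $O(T)$ candidates by maximal gain in decreasing order ($O(T\log T)$) and sweep through them once, maintaining the set of already-selected change points in a balanced search tree. For each candidate $(\ell,r]$ whose gain still exceeds the threshold, a single predecessor/successor query decides in $O(\log T)$ whether some selected point lies in $(\ell,r]$; if none does, its $\hat s_{(\ell,r]}$ is inserted. Because the sweep is in global gain order, the first non-eliminated interval encountered at any moment is exactly the maximal-gain survivor, so this reproduces the greedy selection while running in $O(T\log T)$ time with $O(T)$ memory. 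The narrowest-over-threshold rule is handled identically, except that the candidates above the threshold are first sorted by interval length in increasing order before the same search-tree sweep; again each of the $O(T)$ candidates incurs $O(\log T)$ work.

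Combining the two phases yields overall time $O(T\log T)$ and memory $O(T)$. The only point requiring genuine argument, beyond routine accounting, is the equivalence claim in phase (ii): one must verify that a single sorted sweep with a search-tree membership test produces the same sequence of estimates as the iterative ``find extremum, then eliminate'' description of each rule. I expect this to follow by induction from the monotonicity of the sorting key (gains selected in non-increasing order, lengths in non-decreasing order), since at each step the next non-eliminated interval in sorted order coincides with the current extremal survivor; this is the step I would write out most carefully.
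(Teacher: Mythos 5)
Your proposal is correct and follows essentially the same route as the paper's own proof: the CUSUM phase is charged to the $O(T\log T)$ total length of the seeded intervals, and the selection phase is reduced to sorting the $O(T)$ candidates (by gain or by length) plus $O(\log T)$ membership queries against a balanced search tree of already-selected change points. Your explicit attention to the cumulative-sum precomputation and to the equivalence of the single sorted sweep with the iterative ``find extremum, then eliminate'' description is a slightly more careful write-up of steps the paper treats implicitly; the paper additionally sketches an alternative bookkeeping device (marking the $O(\log T)$ seeded intervals containing each found change point), but this is not needed for the stated bounds.
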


\begin{proof}
The algorithm consists of two steps: the calculation of the CUSUM statistics from \eqref{CUSUM} for all seeded intervals and the subsequent model selection. For the first step, the cost of calculating CUSUM statistics on all seeded intervals is proportional to their total length, i.e., $O(T \log T)$.

In the second step, we only consider the candidate split points from seeded intervals with maximal CUSUM values above the given threshold, and sort them either according to their maximal gains (for the greedy selection) or their lengths (for the narrowest over threshold selection). Note that the number of intervals contained in a collection of seeded intervals is of the order $O(T)$. Hence, the sorting procedure involves $O(T \log T)$ computations in the worst case. Then we check for each seeded interval in the sorted list whether it contains any previously found change point. Such a check can be done in $O(\log T)$ computations as follows. We store all previously found change points in a balanced binary tree structure \citep[e.g., the AVL tree by][]{AVL62}, where both the search and the insertion of new elements have the worst case complexity $O(\log T)$, and the memory complexity is $O(T)$ (given the maximum of $T$ total points that could potentially be declared as change points and thus need to be stored in the AVL tree). Thus, one needs $O(\log T)$ time for checking whether the candidate split is suitable (i.e.~the corresponding search interval does not contain any previously found change point, for which finding the nearest neighbouring points in the AVL tree to the currently considered candidate is sufficient), and $O(\log T)$ time for updating the AVL tree supposing that the candidate is suitable and thus that it is declared/detected as a new change point. Alternatively, we can utilise the structure of seeded intervals by noticing that each potential split point $s \in \{1,\ldots,T\}$ is contained in at most $O(\log T)$ intervals (as in each of the $O(\log T)$ scales there is a constant number of intervals containing $s$). This allows us to mark all of the intervals that contain the found change point with a cost of $O(\log T)$ such that these intervals can then be skipped in consecutive steps when adding the new change points. Overall, both approaches (AVL trees and the specific structure of seeded intervals) lead to a worst case complexity $O(T\log T)$ for the second step as we go through at most $O(T)$ intervals (the total number of intervals in a collection of seeded intervals) and perform at most $O(\log T)$ computations at each (checking whether the interval is suitable and an update of the AVL tree). Therefore, the total computational complexity is $O(T \log T)$ and it does not depend on the underlying number of true change points. The total memory complexity is composed of the storage of information (maximal gain, best split, as well as start and end point for each search interval) plus the AVL tree, all of which requires at most $O(T)$ memory. 
\end{proof}

\begin{theorem}\label{th:infoSel}
Assume model~\eqref{eq:model} and that in seeded binary segmentation (i.e., Algorithm~\ref{alg:seeded_binary_segmentation}) we determine a full solution path (i.e., calculate segmentations with all thresholds corresponding to maximal gains in any of the seeded intervals) and then select the final model out of the ones in the solution path based on an information criterion. Specifically, consider an information criterion with a penalty $\IC(\cdot)$ of the form
\begin{equation}\label{eq:ic}
    \tilde\eta_1<\cdots<\tilde\eta_m \quad
\mapsto\quad
\sum_{i = 1}^m \sum_{t \in (\tilde\eta_i,\tilde\eta_{i+1}]} (\bar X_{(\tilde\eta_i,\tilde\eta_{i+1}]} - X_t)^2 + \IC(\{\tilde\eta_1,\ldots,\tilde\eta_m\})
\end{equation}
with $\bar X_{(i,j]} = \sum_t X_t / (j-i)$, $\tilde\eta_0 = 0$ and $\tilde\eta_{m+1} = T$. Assume further that given $\IC(S)$ the computation of $\IC(S \cup \{\tilde\eta\})$ is $O(1)$, for every finite set of split points $S$ and any arbitrary additional split point $\tilde\eta$. Then seeded binary segmentation has the memory complexity $O(T)$ and the computational complexity of 
$$
\begin{cases}
    O(T \log T) &\quad \text{if performing the greedy selection},\\
    O(T^2 \log T) &\quad \text{if performing the narrowest over threshold selection};
\end{cases}
$$
when the segmentation with the best value of the information criterion \eqref{eq:ic} in the solution path is chosen.
\end{theorem}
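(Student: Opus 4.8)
The plan is to split the cost into the CUSUM computation and the selection-plus-scoring over the solution path, reusing the accounting from the proof of Theorem~\ref{th:fixThd} for the individual pieces. As there, computing all CUSUM statistics costs $O(T\log T)$, the number of seeded intervals is $|\I|=O(T)$, and every split point lies in at most $O(\log T)$ of them. In addition I would precompute the prefix sums $P_t=\sum_{u\le t}X_u$ and $Q_t=\sum_{u\le t}X_u^2$ in $O(T)$ time and $O(T)$ memory; these let me evaluate the residual sum of squares of any segment $(i,j]$, namely $(Q_j-Q_i)-(P_j-P_i)^2/(j-i)$, in $O(1)$, so that adding one split point to a current segmentation updates the first (quadratic) term in \eqref{eq:ic} in $O(1)$ once the two neighbouring change points are known. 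The penalty term $\IC$ updates in $O(1)$ by assumption.

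For the greedy selection I would first argue that the whole solution path is nested. The greedy rule repeatedly picks the interval of currently largest maximal gain, records its best split, and eliminates every interval containing that split; hence the recorded candidates $s_1,s_2,\dots$ carry non-increasing gains $g_1\ge g_2\ge\cdots$. Thresholding at $\theta$ therefore keeps exactly the prefix $\{s_i:g_i>\theta\}$, so the distinct segmentations along the path are the nested prefixes $\emptyset\subset\{s_1\}\subset\{s_1,s_2\}\subset\cdots$. Consequently one greedy pass produces the entire path: sorting the $O(T)$ intervals by gain costs $O(T\log T)$, and the pass itself is $O(T\log T)$ exactly as in Theorem~\ref{th:fixThd} (each interval checked and eliminated in $O(\log T)$ via the AVL tree, or via the $O(\log T)$-containment structure of seeded intervals). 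I would fold the scoring into this pass: maintaining the accepted change points in a balanced tree, each newly accepted $s_i$ is located between its two neighbours in $O(\log T)$, the first term of \eqref{eq:ic} is updated in $O(1)$ and $\IC$ in $O(1)$, giving the criterion value of every prefix. Over the $O(T)$ prefixes this is $O(T\log T)$, and a final linear scan selects the best value, so the greedy branch is $O(T\log T)$.

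For the narrowest-over-threshold selection the path is genuinely not nested: lowering the threshold activates a possibly narrower interval that is selected earlier and, through the elimination step, can alter the whole segmentation. I would therefore not update incrementally but instead enumerate the relevant thresholds. Since only the $|\I|=O(T)$ maximal gains can change the active set, there are $O(T)$ distinct thresholds, and the segmentation is constant between consecutive gain values. For each such threshold I run the narrowest-over-threshold selection once, which by the argument of Theorem~\ref{th:fixThd} costs $O(T\log T)$, and I compute the value of \eqref{eq:ic} for the resulting segmentation incrementally inside that run (prefix sums for the first term, the assumed $O(1)$ rule for $\IC$), at no extra asymptotic cost. Keeping only the best value seen, this yields $O(T)\cdot O(T\log T)=O(T^2\log T)$ time while never storing more than one segmentation at once.

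Finally, the memory bound is $O(T)$ in both cases: the prefix sums, the per-interval records (maximal gain, best split, endpoints), the AVL tree of at most $T$ change points, and the running best criterion value are each $O(T)$. The main obstacle I anticipate is the structural dichotomy between the two selection rules. The crux of the greedy branch is establishing the prefix/nested property that collapses the path into a single pass, whereas for narrowest-over-threshold the key point is recognising that no such collapse is available, so that bounding the number of distinct thresholds by $O(T)$ and charging a full $O(T\log T)$ recomputation to each is exactly what produces the stated $O(T^2\log T)$.
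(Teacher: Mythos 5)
Your proof is correct and follows essentially the same route as the paper's: a single threshold-zero greedy pass exploiting the nested solution path, with $O(\log T)$ incremental updates of \eqref{eq:ic} via precomputed cumulative sums and a balanced tree, and for the narrowest-over-threshold selection an enumeration of the $O(T)$ candidate thresholds with a full $O(T \log T)$ recomputation charged to each, yielding $O(T^2\log T)$ time and $O(T)$ memory by storing only criterion values rather than segmentations. Your explicit monotone-gains argument for the prefix structure of the greedy path merely makes precise the nestedness that the paper asserts directly.
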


We note that the information criterion \eqref{eq:ic} in Theorem~\ref{th:infoSel} is fairly general, including e.g., the Akaike information criterion, the Bayes information criterion and its modifications \citep{mBIC_Zhang} as special cases. The upper bound for the narrowest over threshold selection in Theorem~\ref{th:infoSel} is worse than that for the greedy selection, because, as mentioned previously, for the narrowest over threshold selection even similar thresholds sometimes lead to solutions differing both with respect to number and location of the change point estimates. Note that for the cases when close-by thresholds produce similar models, one could reuse parts of previous segmentations and thus improve on the worst case bound. We do not know whether any smart updating (potentially using the specific deterministic structure of seeded intervals) could lead to some provably better complexity than the crude bound stated. For this one would need to be able say something about the instability in the solution path (i.e., bound the number of cases when similar thresholds produce very different segmentations).

From a practical perspective, thresholds $\kappa = C \log^{1/2} T$ are of interests for some constant $C$. Thus, instead of investigating all possible thresholds, picking some of this specific order could be sufficient to obtain good empirical results while maintaining fairly fast computations, i.e., $O(T \log T)$.   
Alternatively, one could use a reasonable threshold (maybe even a few different ones) and then create a `solution path' by deleting the found change points in a bottom up fashion based on local values of the test statistic (i.e., at each step one deletes the change point which has the smallest CUSUM value in the segment determined by its closest left and right neighbours, similar to the reverse segmentation proposal of \cite{Chan_Chen2017}) to obtain a hierarchy of nested models for which the evaluation of the information is cheap (similar to greedy selection). Yet another option is to consider fewer layers in the collection of seeded intervals, leading to much fewer intervals in total (as most intervals are at the lowest layers covering small scales) and thus also much fewer possible thresholds to be used for the solution path. While the overall cost of the selection procedure could be reduced massively, the speedup would come at a price, namely, the risk of not detecting true change points with a small spacing in between. Last, but not least, computations in parallel are easy to implement both for the calculation of the test statistic in the various search intervals, as well as the model selection using different thresholds (in the narrowest over threshold selection).

\begin{proof}
In order to apply the information criterion~\eqref{eq:ic}, we need to compute the full solution path, namely, the solutions for all possible choices of thresholds. Consider first the greedy selection, which has the nice property that the obtained solutions are nested, i.e.~adding another split point (equivalent to lowering the threshold) splits one of the existing intervals into two parts. Hence, to obtain the full solution path, one can just let the model selection run as in Theorem~\ref{th:fixThd}, but with a threshold set to~$0$, leading to at most $O(T\log T)$ computations. Further, the penalty $\IC(\cdot)$ can be evaluated in an incremental way, i.e., each time when adding a new change point, which by assumption needs $O(1)$ computations per update. For the least squares part in~\eqref{eq:ic}, it is also possible to update incrementally with $O(\log T)$ computations at each step. More precisely, we have to find the nearest left and right point (out of all previously found change points) to the newly added change point. Once the neighbours are found, one can just update the least squares of the specific segment that was freshly divided in criterion~\eqref{eq:ic} in $O(1)$ time by utilising cumulative sum type of information (all of which can be pre-computed once in the beginning in $O(T)$ time, and stored in $O(T)$ space). Such searches of nearest neighbours can be done in $O(\log T)$ time, if we build and iteratively update a balanced binary tree \citep[e.g., the AVL tree by][]{AVL62} for the found change points as in the proof of Theorem~\ref{th:fixThd}. Note that such balanced trees can be updated incrementally also in $O(\log T)$ time. Given that we go through at most $O(T)$ intervals and perform at most $O(\log T)$ computations at each, we thus obtain an upper bound on the worst case computational complexity of $O(T \log T)$ for the greedy selection.  

Consider now the narrowest over threshold selection. Note that there are at most $O(T)$ seeded search intervals and thus thresholds which can potentially lead to different solutions. For each threshold, we can compute the solution in $O(T \log T)$ time by Theorem~\ref{th:fixThd} and evaluate the information criterion~\eqref{eq:ic} in $O(T \log T)$ time (i.e., in an incremental way similarly to the greedy selection considered above). Thus, in total, we have an upper bound on the worst case complexity being $O(T^2 \log T)$.

Note that only the value of the information criterion \eqref{eq:ic} needs to be stored for every threshold that leads to a different solution, requiring $O(T)$ memory. The final solution can be calculated again with $O(T\log T)$ time once the optimal threshold is selected. Thus, the memory complexity is $O(T)$ for both selection methods. Of course, if one wants to record all possible segmentations from the solution path of the narrowest over threshold selection, the memory complexity could be $O(T^2)$. 
\end{proof}

\section{Statistical theory}
\label{Theory}
The introduction of seeded intervals (in Definition~\ref{def:seeded_intervals}) not only accelerates the computation, but also simplifies the statistical analysis of the resulting segmentation method, i.e., the seeded binary segmentation (see Algorithm~\ref{alg:seeded_binary_segmentation}). Consistency results depend on the assumed model as well as the selection method. As an illustration, we consider again the Gaussian change in mean model \eqref{eq:model} for which wild binary segmentation \citep{Fryzlewicz_WBS} and the narrowest over threshold method \citep{Baranowski} were claimed to be minimax optimal. However, the proof of the former seems to contain errors, e.g., some bounds for the single change point case were wrongly applied to the multiple change point cases. This has been noted by \cite{WYR18} in the (sub)Gaussian setup. Contrary to what is claimed by \cite{Fryzlewicz_WBS}, the greedy selection utilised in wild binary segmentation does not lead to optimal detection (and the same is also true for wild binary segmentation~2 of \cite{Fryzlewicz_WBS2} as the therein utilised steepest drop to low levels selection is also a certain greedy selection method). Put differently, without additional assumptions on the generated intervals or modification of the selection method itself, consistency seems only achievable for a smaller class of signals than claimed originally. Albeit in a nonparametric setup, \cite{Kolmogorov_Smirnov_CP} mention the errors of the original proofs of \cite{Fryzlewicz_WBS} as well, along some discussion on an additional assumption that they impose under which nearly optimal results can be attained (see their second remark after Theorem~2 on page~8). However, their assumption is a restriction on the generated search intervals requiring knowledge of the minimal segment lengths. Depending on the investigated data, such a background knowledge may or may not be a realistic assumption. Note that from a practical perspective greedy selection still offers advantages (e.g. easy applicability and implementation in various scenarios, segmentations are nested and stable across thresholds, etc.) and in terms of empirical performance it is often at least as good as the narrowest over threshold method for example (see the simulations in section~\ref{Simulations}). As seeded binary segmentation would suffer similarly if the greedy selection method is chosen, we show the strong consistency using the narrowest over threshold selection method \citep{Baranowski} as an illustration.

\begin{theorem}
\label{th:sbsNOT}
Assume model~\eqref{eq:model} and let $\delta_* = \min_{i = 1,\ldots,N}\delta_i$ with $\delta_i = \abs{f_{\eta_i + 1} - f_{\eta_i}}$, and $\lambda = \min_{i = 0, \ldots, N} \abs{\eta_{i+1} - \eta_i}$ with $\eta_0 = 0$ and $\eta_{N+1} = T$. Assume also that 
\begin{equation}\label{eq:ass}
\delta_*^2\lambda \ge C_0\log T \qquad \text{ for some constant }C_0. 
\end{equation}
Let $\hat N$ and $\hat\eta_1 < \cdots < \hat \eta_{\hat N}$ denote respectively the number and locations of estimated change points by the seeded binary segmentation in Algorithm~\ref{alg:seeded_binary_segmentation} with the narrowest over threshold as the selection method. Then there exist constants $C_1$, $C_2$, independent of $T$ and $a$ (in Definition~\ref{def:seeded_intervals}), such that, given the threshold for the selection method 
$\kappa = C_1 \log^{1/2} T$, as $T\to \infty$, 
\begin{equation}\label{eq:c}
\pr\left\{\hat N = N,\, \max_{i = 1, \ldots, N} \delta_i^2\abs{\hat \eta_i - \eta_i} \le C_2 \log T\right\} \to 1\,.
\end{equation}
\end{theorem}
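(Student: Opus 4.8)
The plan is to separate a probabilistic step, controlling the stochastic fluctuations of the CUSUM statistic, from a purely deterministic step carried out on a favourable event. Since the statistic \eqref{CUSUM} is linear in the data, I write $T_{(\ell,r]} = T^f_{(\ell,r]} + T^\epsilon_{(\ell,r]}$ for its decomposition into the signal part (data $f_t$) and the noise part (data $\epsilon_t$). I would first define the good event
\begin{equation*}
\mathcal{A} = \Bigl\{ \max_{(\ell,r]\in\I}\ \max_{\ell< s< r}\ \abs{T^\epsilon_{(\ell,r]}(s)} \le c_0 \log^{1/2} T \Bigr\}
\end{equation*}
and show $\pr(\mathcal A)\to 1$ for a suitable constant $c_0$. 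As each $T^\epsilon_{(\ell,r]}(s)$ is a standard Gaussian and, by Definition~\ref{def:seeded_intervals} together with the near-linear total length bound, there are only $O(T^2\log T)$ pairs $((\ell,r],s)$, this is a routine Gaussian maximal inequality with a union bound. The threshold is then calibrated as $\kappa=C_1\log^{1/2}T$ with $C_1>c_0$, and everything below runs deterministically on $\mathcal A$.

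\textbf{Deterministic coverage.} The geometric heart specific to seeded binary segmentation is a coverage lemma: for every true change point $\eta_i$ there is a seeded interval $J_i=(\ell_i,r_i]\in\I$ containing $\eta_i$ and no other change point, with $r_i-\ell_i \asymp \lambda_i:=\min(\eta_i-\eta_{i-1},\eta_{i+1}-\eta_i)$ and with $\min(\eta_i-\ell_i,\,r_i-\eta_i)\ge c\,(r_i-\ell_i)$. This I would read off directly from Definition~\ref{def:seeded_intervals}: the layer whose length $l_k=Ta^{k-1}$ falls in $(a\lambda_i,\lambda_i]$ has width at most $\lambda_i$, so it cannot reach a neighbouring change point, while the guaranteed overlap of consecutive intervals within a layer places $\eta_i$ at a constant-fraction distance from both endpoints. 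On such a well-centred single-change-point interval the signal CUSUM satisfies $\abs{T^f_{J_i}(\eta_i)}\ge c\,\delta_i(r_i-\ell_i)^{1/2}\ge c'\delta_i\lambda_i^{1/2}\ge c'\delta_*\lambda^{1/2}\ge c'(C_0\log T)^{1/2}$ by assumption~\eqref{eq:ass}, which for $C_0$ large enough exceeds $\kappa+c_0\log^{1/2}T$; hence on $\mathcal A$ the observed maximal gain on $J_i$ is above the threshold.

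\textbf{Exact recovery.} On $\mathcal A$ I would then analyse the narrowest-over-threshold selection deterministically, driven by two facts. First, no over-detection: any interval on which the observed maximal gain exceeds $\kappa>c_0\log^{1/2}T$ must contain a true change point, since otherwise $f$ is constant on it and the gain is pure noise, at most $c_0\log^{1/2}T$. Second, single-point isolation: by the coverage lemma each still-\emph{unclaimed} change point $\eta_i$ possesses the above-threshold surviving interval $J_i$ of width at most $\lambda_i$, so any surviving above-threshold interval that properly contained two change points (whose separation is at least $\lambda_i$) would be wider than the $J_i$ of one of them, contradicting the narrowest rule up to the geometric spacing of layer lengths, which only affects constants and ties. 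Thus each selection step isolates exactly one true change point and declares a candidate near it. Combining this with the localisation bound below, every declared point lies within $C_2\lambda_i/C_0\ll\lambda_i$ of a distinct true change point (using $\delta_i^2\lambda_i\ge C_0\log T$), so the elimination step removes precisely the intervals belonging to that change point, and the induction proceeds change point by change point, terminating with $\hat N=N$.

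\textbf{Localisation and the main obstacle.} It remains to upgrade ``near'' to the optimal rate $\delta_i^2\abs{\hat\eta_i-\eta_i}\le C_2\log T$, which I expect to be the crux. On the isolating interval $J_i$, with $\hat s=\argmax_{s}\abs{T_{J_i}(s)}$, optimality gives $\abs{T_{J_i}(\hat s)}\ge\abs{T_{J_i}(\eta_i)}$, and I would exploit the decay of the squared signal CUSUM away from its peak, $\bigl(T^f_{J_i}(\eta_i)\bigr)^2-\bigl(T^f_{J_i}(s)\bigr)^2\ge c\,\delta_i^2\abs{s-\eta_i}$ for well-centred $\eta_i$. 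The naive bound that replaces each noise term by $c_0\log^{1/2}T$ is too lossy here: it only yields $\delta_i^2\abs{\hat s-\eta_i}\le C\,\delta_i(\lambda_i\log T)^{1/2}$, which under~\eqref{eq:ass} is larger than $\log T$. The decisive and most delicate step is therefore a sharper, increment-type control of the noise: rather than bounding $T^\epsilon_{J_i}(\hat s)$ and $T^\epsilon_{J_i}(\eta_i)$ separately, one must bound the bilinear cross term and the noise increment over the gap $\abs{\hat s-\eta_i}$, showing the right-hand side of the optimality inequality is $O\bigl((\delta_i^2\abs{\hat s-\eta_i}\,\log T)^{1/2}+\log T\bigr)$. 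Feeding this into the quadratic lower bound and solving gives $\delta_i^2\abs{\hat s-\eta_i}=O(\log T)$, i.e.\ the claimed rate. This Venkatraman-type localisation, applied interval by interval, is where the real work lies; the seeded construction contributes only the guarantee that the required well-centred single-change-point intervals are deterministically available, which is exactly what lets the single-change-point analysis be applied locally and the global consistency~\eqref{eq:c} be assembled.
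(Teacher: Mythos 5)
Your proposal takes essentially the same route as the paper: the paper's entire proof consists of the deterministic coverage property of seeded intervals (for each $\eta_i$ one finds a seeded interval inside $(\eta_i-\lambda,\eta_i+\lambda]$ of length $\gtrsim a^2\lambda$ with $\eta_i$ well-centred), after which it reduces the problem to single-change-point detection and defers verbatim to the proof of Theorem~1 of \citet{Baranowski} --- precisely the good-event construction, no-over-detection, narrowest-rule isolation and Venkatraman-type localisation steps you sketch. Your write-up is a correct expanded account of that cited argument (your localised spacing $\lambda_i$ in place of the global $\lambda$ and the overcount $O(T^2\log T)$ of union-bound terms, versus the $O(T\log T)$ pairs actually generated by seeded intervals, are harmless refinements/slack), so it matches the paper's proof in substance.
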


The signal $f$ as well as $\delta_i$, $\delta_*$ and $\lambda$ is allowed to depend on $T.$ The assumption for detection in \eqref{eq:ass} and the localisation rates $\delta_i^{-2}\log T$ in \eqref{eq:c} are minimax optimal up to constants~\citep[see e.g.][]{Chan_Walther2013,Frick_etal,Chan_Chen2017,WYR18,LGM19}. That is, no method can detect all $N$ change points with weaker condition than \eqref{eq:ass}, and no method can achieve a better rate than $\delta_i^{-2}\log T$, in the asymptotic sense. 

\begin{proof}
We consider interval $(\eta_i - \lambda, \eta_i+\lambda]$ for each change point $\eta_i$. By construction of seeded intervals in Definition~\ref{def:seeded_intervals}, we can find a seeded interval $(c_i - r_i, c_i + r_i]$ such that $(c_i - r_i, c_i + r_i] \subseteq (\eta_i - \lambda, \eta_i+\lambda]$, $r_i \ge a^2 \lambda$ and $\abs{c_i - \eta_i} \le r_i (1 + a^2)/2$. Then the problem almost reduces to the detection of change point on the interval $(c_i - r_i, c_i + r_i]$. The rest of the proof follows nearly the same as the proof of Theorem 1 in \citet{Baranowski}, and is thus omitted.  
\end{proof}

We can further establish the consistency of the seeded binary segmentation using certain data adaptive selection rule (i.e., strengthened Schwarz information criterion).  

\begin{theorem}
Assume notation from Theorem~\ref{th:sbsNOT}, and that $\lambda \ge \log^{\theta_0} T$, and  $N$, $\delta_j$, $j=1,\ldots,N$ lie in $(c_1, c_2)$ for some constants $\theta_0 > 1$ and $0<c_1<c_2< \infty$. Let $\hat N$ and $\hat \eta_1 < \cdots < \eta_{\hat N}$ be respectively the number and locations of estimated change points by the seeded binary segmentation in Algorithm~\ref{alg:seeded_binary_segmentation} with the narrowest over threshold, the threshold of which is determined by the strengthened Schwarz information criterion using $\theta\in (1,\theta_0)$. Then there is a constant $C$ independent of $T$  and $a$ (in Definition~\ref{def:seeded_intervals}) such that, as $T \to \infty$, 
$$
\pr\left\{\hat N = N,\, \max_{i = 1, \ldots, N} \delta_i^2\abs{\hat \eta_i - \eta_i} \le C \log T\right\} \to 1\,.
$$
\end{theorem}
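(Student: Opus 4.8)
The plan is to show that the strengthened Schwarz information criterion picks, out of the finitely many segmentations appearing in the narrowest-over-threshold solution path, precisely the correct model already produced in Theorem~\ref{th:sbsNOT}; the localisation bound then follows directly from that theorem. Writing the criterion in the form~\eqref{eq:ic} with penalty $\IC(S)=|S|\,(\log T)^{\theta}$, the entire argument hinges on two orders of magnitude that the hypotheses $\theta\in(1,\theta_0)$ and $\lambda\ge\log^{\theta_0}T$ are designed to separate. A genuine change point contributes residual energy of order at least $c_1^2\log^{\theta_0}T$ (using $\delta_*>c_1$ and $\lambda\ge\log^{\theta_0}T$), which for large $T$ exceeds one unit of penalty $(\log T)^{\theta}$ because $\theta<\theta_0$; conversely, a spurious split can lower the residual sum of squares by at most $O(\log T)$, which is in turn exceeded by one unit of penalty because $\theta>1$.

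First I would fix a high-probability event on which the analysis becomes deterministic. A maximal inequality for Gaussian CUSUM statistics together with a union bound over the $O(T\log T)$ seeded intervals and their admissible split points yields an event $\mathcal A_T$ with $\pr(\mathcal A_T)\to 1$ on which the CUSUM statistics of the noise sequence $\epsilon$ are uniformly of order $O(\log^{1/2}T)$. On $\mathcal A_T$ two facts hold at once. On one hand, exactly as in Theorem~\ref{th:sbsNOT} (whose proof reduces to the single-change-point analysis of \citet{Baranowski}), every threshold in a suitable range of order $\log^{1/2}T$ makes the narrowest-over-threshold procedure return $\hat N=N$ with $\delta_i^2\abs{\hat\eta_i-\eta_i}\le C\log T$; in particular the correct model $M^{\ast}$ occurs in the solution path. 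On the other hand, the noise bound controls over-fitting: adding any split to a segment on which $f$ is constant decreases the residual sum of squares by the squared CUSUM of pure noise, hence by at most $O(\log T)$.

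Next I would bound the criterion value of every incorrect segmentation $M$ in the path against that of $M^{\ast}$. Crucially this needs no monotonicity of the path in the threshold (a source of instability the paper itself flags); one simply compares the finitely many path models one by one. If $M$ places no estimate in the $\delta_i^{-2}\log T$-neighbourhood of some true $\eta_i$, then some fitted segment straddles a jump of size larger than $c_1$ over a span of length at least $\lambda$; decomposing the residual sum of squares into signal and noise parts and invoking $\mathcal A_T$ gives a residual excess of order at least $c_1^2\log^{\theta_0}T$, whereas the penalty difference is no smaller than $-N(\log T)^{\theta}=-O(\log^{\theta}T)$, so $\theta<\theta_0$ forces the criterion value of $M$ to exceed that of $M^{\ast}$. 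If instead $M$ localises all true change points but carries $\hat N-N\ge 1$ superfluous splits, each such split sits in a region of constant signal and lowers the residual sum of squares by at most $O(\log T)$, while the penalty grows by $(\hat N-N)(\log T)^{\theta}$; since $\theta>1$ this again makes the criterion value of $M$ larger. As every incorrect candidate is dominated, the minimiser is $M^{\ast}$, delivering both $\hat N=N$ and the asserted localisation.

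The main obstacle is the over-fitting estimate, namely that a spurious split reduces the residual sum of squares by only $O(\log T)$ uniformly. Here the deterministic structure of the seeded intervals does real work: since only splits inside the $O(T\log T)$ seeded intervals are ever considered, the relevant maximum of squared noise CUSUM statistics is taken over a controlled collection, and the union bound delivers the clean $O(\log T)$ order with a constant that is then absorbed into the comparison with $(\log T)^{\theta}$. One must also verify that, for a segment straddling a change point, the signal/noise decomposition of the residual sum of squares is carried out cleanly enough that the signal–noise cross term is negligible against $c_1^2\log^{\theta_0}T$ on $\mathcal A_T$; this is routine but is where most of the bookkeeping resides.
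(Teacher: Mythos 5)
Your proposal is correct and follows essentially the same route as the paper, whose proof is a one-line deferral to Theorem~3 of \citet{Baranowski}: your model-comparison argument (the correct model lies in the narrowest-over-threshold solution path by Theorem~\ref{th:sbsNOT}; under-fitted models lose via the $\log^{\theta_0}T$ versus $\log^{\theta}T$ gap; over-fitted ones via the $\log^{\theta}T$ versus $\log T$ gap) is precisely the structure of that cited proof. One small correction: the uniform $O(\log T)$ over-fitting bound must hold for segments whose endpoints are arbitrary estimated change points, so it comes from a Gaussian union bound over all $O(T^2)$ subintervals of $\{1,\ldots,T\}$ rather than from the seeded-interval structure, which plays no essential role at that step.
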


\begin{proof}
It follows exactly the same way as the proof of Theorem 3 in \cite{Baranowski}.
\end{proof}

In a nutshell, seeded binary segmentation combined with the narrowest over threshold selection has the same statistical guarantee as the original narrowest over threshold method of \cite{Baranowski}, but requires far less computations, with the worst case computational complexity being nearly linear (for a single threshold based selection), see~Theorem~\ref{th:fixThd}.

\section{Empirical results}
\label{Simulations}
We consider the same simulations as presented by \citet{Fryzlewicz_WBS} in connection with wild binary segmentation. For a description of the signals see Appendix B of \citet{Fryzlewicz_WBS}. We used implementations of the wild binary segmentation and the narrowest over threshold method from the \textsf{R} packages \textbf{wbs} and \textbf{not} from CRAN, respectively, with a small bug fixed in both packages concerning the generation of random intervals. Moreover, for our proposed seeded binary segmentation methodology we slightly modified the code in \textbf{wbs} to allow for seeded intervals as inputs. The code is available at \url{https://github.com/kovacssolt/SeedBinSeg}. Arbitrary intervals as inputs were already supported by the \textbf{not} package. 
Note however, that the latter implementation can be slow as all segmentations in the solution path resulting from all possible thresholds are recorded, already leading to a quadratic number of candidates being saved (see Theorem \ref{th:infoSel} and corresponding discussions for details).

Table \ref{tab:simulation_results} shows results based on averaged values over $100$ simulations for each setup. Reported are the mean squared errors (MSE), Hausdorff distances, V measures \citep[more precisely, viewing the change point segmentation as a clustering problem by identifying each constant part as a cluster]{RoHi07}, errors in the estimates for the number of change points as well as the total length of the constructed search intervals. Parentheses show the corresponding standard deviations. For our seeded binary segmentation methodology we chose $m=2$ and $a=1/2^{1/2}$. Exceptions are the rows marked with a star, where the decay was chosen to be $a=1/2^{1/8}$, i.e. an increase in the total length of search intervals by a factor of roughly four. For both the wild binary segmentation and the narrowest over threshold methods we chose the recommended $5000$ random intervals. Moreover, all methods selected the final model using the same information criterion. 

Overall, seeded binary segmentation with greedy selection (g-SeedBS) performs similar to wild binary segmentation (WBS), while seeded binary segmentation with the narrowest over threshold selection (n-SeedBS) is similar to the original narrowest over threshold method (NOT). Most importantly, using seeded intervals we achieve competitive performance with total lengths of search intervals that are orders of magnitudes smaller than using the random intervals. In these scenarios both selection types perform similarly. We have the experience that increasing the noise level would lead to better results using the greedy selection. The performance of our default methodology (with $a=1/2^{1/2}$) is slightly worse for the blocks signal compared to random intervals based counterparts. However, when choosing  $a=1/2^{1/8}$, and thus accepting an increase in the total length of the search intervals by about a factor four, performances are very close again. Note that the total length of search intervals is still only a tenth of that of random intervals. For a better understanding, the simulations for the blocks signal are depicted in Fig.~\ref{fig:estimation_visualization}. In the top row (with $a=1/2^{1/2}$) one can see points lining up vertically when considering Hausdorff distances. This line is around $40$, and indicates that one of the true change points (lying to this distance to its closest neighbour) is missed. However, when increasing computational effort ($a=1/2^{1/8}$, bottom row), this vertical line basically disappears and for all three error measures many points line up exactly on the identity line. This indicates that found change points using seeded binary segmentation with greedy selection and wild binary segmentation exactly agree in many simulation runs.

\begin{table*}[ht]
\caption{Simulation results}
\label{tab:simulation_results}
\begin{threeparttable}
\begin{tabular}{llccccr@{}c}
signal & method	&	MSE	&	Hausdorff dist.	&	V measure	&	$N - \hat N$	&	\multicolumn{2}{c}{tot. length} \\
\hline
blocks 
&g-SeedBS	&	2.922 (1.077)	&	43.150 (31.009)	&	 0.970 (0.013)	&	-0.610 (0.803)  &    95.3   & (0)       \\
&g-SeedBS*	&	2.685 (0.828)	&	32.690 (21.037)	&	 0.973 (0.009)	&	-0.520 (0.577)	&	329.7   & (0)       \\
&WBS	    &	2.627 (0.854)	&	31.520 (23.262)	&	 0.973 (0.011)	&	-0.500 (0.577)	&   3419.1  &\ (35.7)    \\
&n-SeedBS	&	2.942 (1.002)	&	42.630 (28.690)	&	 0.970 (0.013)	&	-0.690 (0.787)	&   95.3    & (0)       \\
&NOT		&	2.539 (0.874)	&	31.700 (22.778)	&	 0.974 (0.011)	&	-0.590 (0.588)	&   3422.3  &\ (32.6)    \\
&n-SeedBS* 	&	2.675 (0.782)	&	33.790 (21.056)	&	 0.973 (0.010)	&	-0.600 (0.532)	&   329.7   & (0)       \\
&NOT		&	2.724 (1.021)	&	42.490 (56.636)	&	 0.972 (0.013)	&	-0.610 (0.650)	&   3421.9  &\ (34.8)    \\
\hline
fms  
&g-SeedBS	&	0.005 (0.004)	&	15.810 (25.830)	&	 0.955 (0.037)	&	-0.020 (0.492)	&   19.1    & (0)   \\
&WBS	    &	0.004 (0.003)	&\ \ 9.080 (14.975)	&	 0.963 (0.031)	&\	 0.000 (0.284)	&   833.8   & (7.9) \\
&n-SeedBS	&	0.004 (0.003)	&	15.500 (25.853)	&	 0.958 (0.035)	&	-0.040 (0.448)	&  	19.1    & (0)   \\
& NOT		&	0.004 (0.002)	&\ \ 8.640 (15.019)	&	 0.967 (0.028)	&	-0.010 (0.266)	&   834.4   & (7.4) \\
\hline
mix  
&g-SeedBS	&	1.598 (0.517)	&	86.870 (57.740)	&	 0.908 (0.044)	&	-1.180 (1.048)	&   22.3    & (0)    \\
&WBS	    &	1.680 (0.728)	&	78.260 (58.431)	&	 0.913 (0.041)	&	-1.090 (1.093)	&   939.1   & (8.1)  \\
&n-SeedBS	&	1.759 (0.605)	&	96.870 (64.631)	&	 0.897 (0.052)	&	-1.340 (1.199)	&   22.3    & (0)    \\
& NOT		&	1.981 (0.914)	&	89.220 (64.899)	&	 0.903 (0.049)	&	-1.260 (1.495)	&   939.2   & (8.3)  \\
\hline
teeth10  
&g-SeedBS	&	0.061 (0.040)	&\ \ 7.960 (20.229)	&	 0.933 (0.130)	&	-0.190 (2.419)	&   4.4     & (0)       \\
&WBS	    &	0.058 (0.040)	&\ \ 7.380 (20.591)	&	 0.936 (0.133)	&	-0.410 (2.170)	&   238.7   & (2.2)     \\
&n-SeedBS	&	0.066 (0.050)	&	10.790 (27.521)	&	 0.911 (0.186)	&	-0.860 (2.903)	&   4.4     & (0)       \\
&NOT		&	0.062 (0.044)	&\ \ 9.150 (24.593)	&	 0.923 (0.162)	&	-0.730 (2.522)  &   238.8   & (2.3)     \\
\hline
stairs10 
&g-SeedBS	&	0.023 (0.011)	&	 2.130 (1.468)	&	 0.981 (0.013)	&\	 0.470 (0.745)	&   4.8     & (0)     \\
&WBS	    &	0.024 (0.013)	&	 1.900 (1.374)	&	 0.980 (0.014)	&\	 0.450 (0.730)	&   255.4   & (2.3)   \\
&n-SeedBS	&	0.021 (0.011)	&	 1.340 (1.249)	&	 0.984 (0.013)	&\	 0.100 (0.362)	&   4.8     & (0)     \\
& NOT	    &	0.021 (0.011)	&	 1.400 (1.326)	&	 0.984 (0.013)	&\	 0.140 (0.450)	&   255.5   & (2.5)   \\
\hline
\end{tabular}
\begin{tablenotes}
\item g-SeedBS, Seeded binary segmentation with greedy selection; 
n-SeedBS, Seeded binary segmentation with narrowest over threshold selection; 
WBS, Wild binary segmentation; 
NOT, Narrowest over threshold method; MSE, Mean squared error; Hausdorff dist., Hausdorff distance; tot. length, total length of search intervals (in thousand); signals are described in Appendix B of \citet{Fryzlewicz_WBS}.
\end{tablenotes}
\end{threeparttable}
\end{table*}

Besides increasing the computational effort by the choice of the decay $a$, there could be other options to improve empirical performance. For example, one could find preliminary change point estimates with the recommended choice of $a=1/2^{1/2}$ and include additional search intervals in a second step covering the ranges between previously found neighbouring change points. Should the best candidates from these additional intervals have test statistics above the pre-defined acceptance threshold, then one can also add them to the final set of change point estimates. Another option mentioned at the definition of seeded intervals is to limit the minimal number of observations $m$ (i.e.~consider fewer layers) if prior information on the minimal true segment lengths is available, which reduces the risk of selecting false positives coming from very small intervals during the final model selection. One could also use more refined test statistics instead of the CUSUM statistics that are adapted to various scales, e.g.~multiscale variants.

\begin{figure}[t]
\begin{center}
\includegraphics[width=0.7\textwidth]{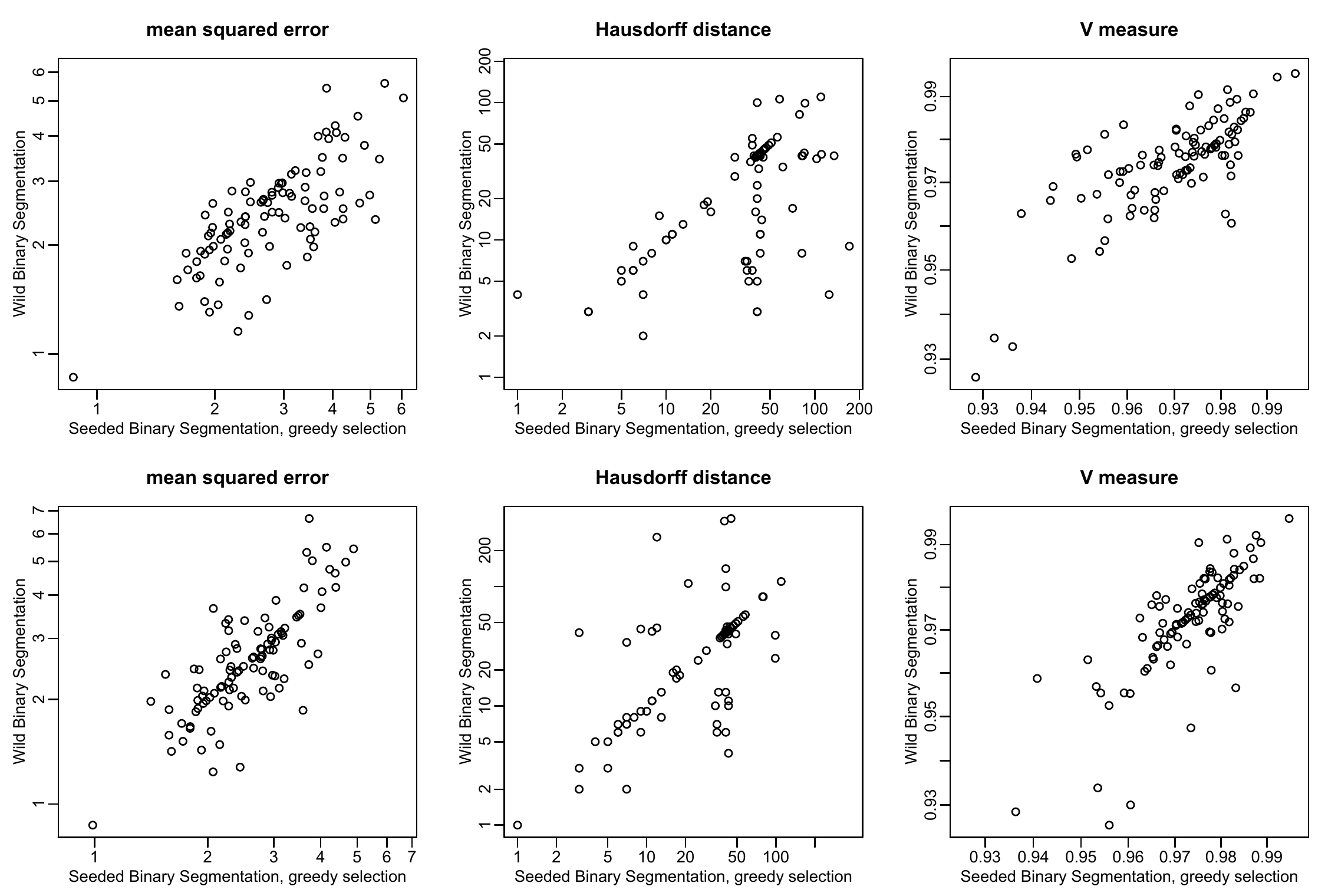}
\caption{Simulation results from Table \ref{tab:simulation_results} for the blocks signal with decay $a=1/2^{1/2}$ (top) and decay $a=1/2^{1/8}$ (bottom).}
\label{fig:estimation_visualization}
\end{center}
\end{figure}

\section{Beyond the univariate case: An example for high-dimensional models}
\label{highdim_example}
As mentioned earlier, another appealing applicability of seeded binary segmentation beyond the extensively discussed univariate Gaussian setup is in scenarios where model fits are expensive, for example in the recently emerging area of high-dimensional change point detection (e.g.~means in \cite{Soh_Chandrasekaran_mean, Wang_highdim_mean_change}; covariance and precision matrices/graphical models in \cite{RoyMichailidis, GibberdNelson, Wang_Yu_Rinaldo_theory, Gibberd_Roy, Bybee_Atchade_JMLR, Avanesov_theory, Londschien}; linear regression in \cite{LeonBuhl, Wang_Willet_regression}). Several of these proposals require the calculation of numerous lasso \citep{lasso_original}, graphical lasso \citep{glasso} or similar rather costly estimators. Another example would be multivariate non-parametric change point detection for which \cite{RF_change_point} use random forests \citep{Breiman2001} and other classifiers, \cite{Padilla_Yu_Wang_Rinaldo_multivariate_nonparametric} use kernel density estimation, while \cite{ecp} utilise energy distances. Overall, in all approaches requiring costly single fits, it is essential to keep the total length of search intervals as small as possible to make computations feasible and thus to have practically usable methods. 

We would like to highlight this in a specific example involving change point detection in (high-dimensional) Gaussian graphical models. We consider a setup and estimator as introduced by \cite{Londschien}. Let $(X_i)_{i = 1}^T \subseteq \mathbb{R}^p$ be a sequence of independent Gaussian random vectors with means $\mu_i \in \mathbb{R}^p$ and covariance matrices $\Sigma_i \in \mathbb{R}^{p \times p}$ 
such that $(\mu_i, \Sigma_i), i = 1, \ldots, T$ is piecewise constant with an unknown number $N$ of change points at unknown locations and corresponding segment boundaries $1=\eta_0 < \eta_1 < \dots < \eta_N < \eta_{N+1}=T$. For $0 \leq u < v \leq T$ let $X_{(u,v]} \in \mathbb{R}^{p \times (v-u)}$ denote the matrix of the observations $X_{u+1},\ldots,X_{v}$, denote with $\hat\mu^{(u,v]}$ their mean and let $S_{(u,v]} = (X_{(u,v]} - \hat\mu^{(u,v]}) (X_{(u,v]} - \hat\mu^{(u,v]})^t/(v - u)$ be the corresponding covariance matrix. Let $0 < \delta < 1/2$ be some required minimal relative segment length and $\lambda>0$ be a regularisation parameter. According to the proposal of \cite{Londschien}, for a segment $(u,v]$ with $v-u>2\delta T$ define the split point candidate as
\begin{equation}
\label{glasso_change_est}
\hat\eta_{(u,v]} = \argmax\limits_{s \in \{ u+\delta T,\ldots, v-\delta T\} } 
L_T(\hat\Omega_{(u, v]}^{\mathrm{glasso}}; S_{(u, v]}) - \left(L_T(\hat\Omega_{(u, s]}^{\mathrm{glasso}}; S_{(u,s]}) + 
L_T(\hat\Omega_{(s, v]}^{\mathrm{glasso}}; S_{(s, v]}) \right),
\end{equation}
where $L_T$ corresponds to a multivariate Gaussian log-likelihood based loss in the considered segment, i.e.   
\begin{equation*}
L_T(\Omega; S_{(u,v]}) =
\frac{v-u}{T}\left(\Tr(\Omega S_{(u,v]}) - \log(|\Omega|) \right),
\end{equation*}
for an arbitrary segment $(u,v]$. Moreover, $\hat\Omega_{(u,v]}^{\mathrm{glasso}}$ is the graphical lasso precision matrix estimator \citep{glasso} for the corresponding segment with a specifically scaled regularisation parameter, i.e.
\begin{equation*}
\hat\Omega_{(u,v]}^{\mathrm{glasso}} = \argmin\limits_{\mathbb{R}^{p \times p} \ni \Omega \succ 0} L_T(\Omega; S_{(u,v]}) + \sqrt{(T / (v- u))}\lambda \|\Omega\|_1.
    \label{eq:glasso_estimator}
\end{equation*}

The goal here is to compare the estimation performance and computational times of the estimator~\eqref{glasso_change_est} when applied to seeded as well as random intervals. We consider a specific chain network model \cite[Example 4.1]{fan2009_chain_network} with $p=20$ variables: $\Sigma_{ij} = \exp{(-a \abs{s_i-s_j})}$ with $a=1/2$ and $s_i - s_{i-1} = 0.75, i = 2,\ldots,p$. The resulting precision matrix $\Sigma^{-1}$ is tridiagonal. We consider a modified version $\tilde \Sigma$ by replacing the top left $5\times5$ part of $\Sigma$ by $I_5$, i.e. an identity matrix of dimension $5$. Let $F_1 = \mathcal{N}(0_p, \Sigma)$ and $F_2 = \mathcal{N}(0_p, \tilde \Sigma)$. We take $20$ segments of length $100$ each (i.e.,~$N=19$) and generate independent observations in the following way. The first hundred observations are drawn from $F_1$, the following hundred from $F_2$, then another hundred from $F_1, F_2, F_1$ etc. In total, we obtain $T=2000$ observations. We simulated $10$ different data sets this way and to each of these applied seeded binary segmentation (greedy selection) with decays $a = 1/2, 1/2^{1/2}, 1/2^{1/4}, 1/2^{1/8}, 1/2^{1/16}$ as well as $M = 100, 200, 400, 800, 1600$ random intervals for wild binary segmentation, leading to the $5$ red, respectively $5$ black point clouds in Fig.~\ref{fig:glasso_example}. To be precise, we set $\delta=0.015$, hence, we left out $30$ observations at the boundaries of background intervals. This also implies that all intervals with less than $60$ observations were discarded for both interval construction types. We set $\lambda = 0.007$ and a threshold of $0.04$ as a necessary improvement in \eqref{glasso_change_est} in order to keep the proposed split point $\hat\eta_{(u,v]}$ in the segment $(u,v]$ and declare it as a change point.

Note that with this specific choice of $\delta$ we do not have a truly high-dimensional scenario in any of the segments, but regularisation is still helpful especially when considering short background intervals (where the number of variables $p$ is of similar magnitude as the number of observations). The reasons why we kept the number of variables~$p$ comparably low are on the one hand the easier choice of tuning parameters, and on the other hand we would quickly face computational bottlenecks with wild binary segmentation when increasing~$p$. In this $20$-dimensional example, around $2300$ seconds are needed for wild binary segmentation to obtain an estimation accuracy that is already achieved in about $28$ seconds using seeded intervals, see top of Fig.~\ref{fig:glasso_example}. The bottom of Fig.~\ref{fig:glasso_example} shows computational gains of similar magnitude ($17$ vs. $1200$ seconds). The difference in computational times for the top and bottom of Fig.~\ref{fig:glasso_example} arise from the different implementation. The top shows a naive implementation where for each split point $s\in (u,v]$ both the input covariance matrices ($S_{(u,s]}, S_{(s, v]}$) as well as the corresponding graphical lasso fits ($\hat\Omega_{(u,s]}^{glasso}, \hat\Omega_{(s, v]}^{glasso}$) are computed from scratch. However, the input covariance matrices can be updated cheaply for neighbouring split points, i.e.~once $S_{(u,s]}$ is available, with a cost of $O(p^2)$ it can be updated to obtain $S_{(u,s+1]}$. This better implementation, shown at the bottom part of Fig.~\ref{fig:glasso_example}, saves $40-50\%$ in computational time both for wild and seeded binary segmentation, but still preserves the massive computational gains of the seeded intervals. The remaining computational costs mainly come from the expensive graphical lasso fits. Without computing the graphical lasso fits, in our \textsf{R} implementation one would only need about $2$ seconds for the fastest version of seeded binary segmentation ($a=1/2$) and about $9$ seconds for $M=100$ and $150$ seconds for $M=1600$ for wild binary segmentation. However, as mentioned previously, without the regularised fits of the graphical lasso, the estimation error would be clearly worse, and hence, this is not an option. Thus, the next best guess to speed up would be to use warm starts from neighbouring splits when updating the graphical lasso fits, similar to the covariance matrix updates described previously. Unfortunately, not all algorithms that have been developed to compute graphical lasso fits are guaranteed to converge with warm starts. \cite{Mazumder2012_dpglasso} proposed the DP-GLASSO algorithm that can be started at any positive definite matrix. When using the corresponding \textbf{dpglasso} \textsf{R} package on CRAN, we obtained speedups of $30-35\%$ with warm starts, but even this way the computational times were roughly $10-20$ times worse than using the graphical lasso algorithm of \cite{glasso} implemented in the \textbf{glasso} \textsf{R} package on CRAN. The latter one we used to obtain the results of Fig.~\ref{fig:glasso_example}. Admittedly, there are a lot of factors influencing computational times (e.g.~convergence thresholds) and different algorithms/implementations might be better or worse in other scenarios with a higher number of variables~$p$ and/or regularisation parameter~$\lambda$.

It is hard to draw an overall conclusion on which combination of implementation and tricks would give the fastest results for changing Gaussian graphical models, but it is even harder to be fast without utilising the more efficient seeded interval construction for the following reason.
In each background interval evaluating \eqref{glasso_change_est} at a single split point $s$ requires the calculation of $2$ graphical lasso fits. Hence, the total computational cost (for a small $\delta$) would be roughly $O(T\log(T)p^2 + T\log(T) \cdot \mathrm{gl}(p))$ for seeded binary segmentation versus $O(TMp^2 + TM \cdot \mathrm{gl}(p))$ for wild binary segmentation, where $\mathrm{gl}(p)$ denotes the cost of computing a graphical lasso estimator for a $p$-dimensional covariance matrix. There are various algorithms computing exact or approximate solutions for the graphical lasso estimator with differing computational cost, but scalings for $\mathrm{gl}(p)$ of $O(p^3)$ (or even worse) are common once the input covariance matrices are already available and unless some special structures can be exploited (e.g.~block diagonal screening proposed by \cite{Witten2011_connected_components} as well as \cite{Mazumder2012_connected_components}). The terms 
$T\log(T)p^2$ and $TMp^2$ come from the calculation of the input covariance matrices via updates. Summarising all observations from above, for typical algorithms for graphical lasso problems, the total costs would be roughly $O(T\log(T)p^3)$ for seeded binary segmentation vs. $O(TMp^3)$ for wild binary segmentation.

As in the univariate Gaussian change in mean case, $M$ needs to be much larger than $O(\log T)$ if there are many change points. In the worst case $M$ is of order $O(T^2)$ when evaluating essentially all possible background intervals. In practice, one often does not know which choice of $M$ is sufficient. This is a problem as the performance depends very much on the choice of $M$ (see Fig.~\ref{fig:glasso_example}), especially in case there are many change points. Another advantage of seeded intervals is, that results are not depending much on the choice of the decay $a$ for the seeded interval construction compared to the very strong dependence on the choice of $M$ (the number of random intervals) for wild binary segmentation. 

Overall the aim of this example is to demonstrate that seeded binary segmentation is highly useful in scenarios beyond the univariate Gaussian change in mean case. While we precisely quantify the computational gains in the changing Gaussian graphical model scenario when using graphical lasso estimators, it is more difficult to quantify this for some other costly estimators such as the lasso \citep{lasso_original} or random forest \citep{Breiman2001} for the following reasons. As a main difference, they have a cost that not only depends on the number of variables, but also on the number of observations, moreover, possibilities to update neighbouring fits would need to be considered carefully. Even in absence of exact quantification of the computational costs in these scenarios, once can expect the substantial computational gains when using seeded intervals to be crucial to keep computations feasible when model fits are very expensive.

\begin{figure}[t]
\begin{center}
\includegraphics[width=0.8\textwidth]{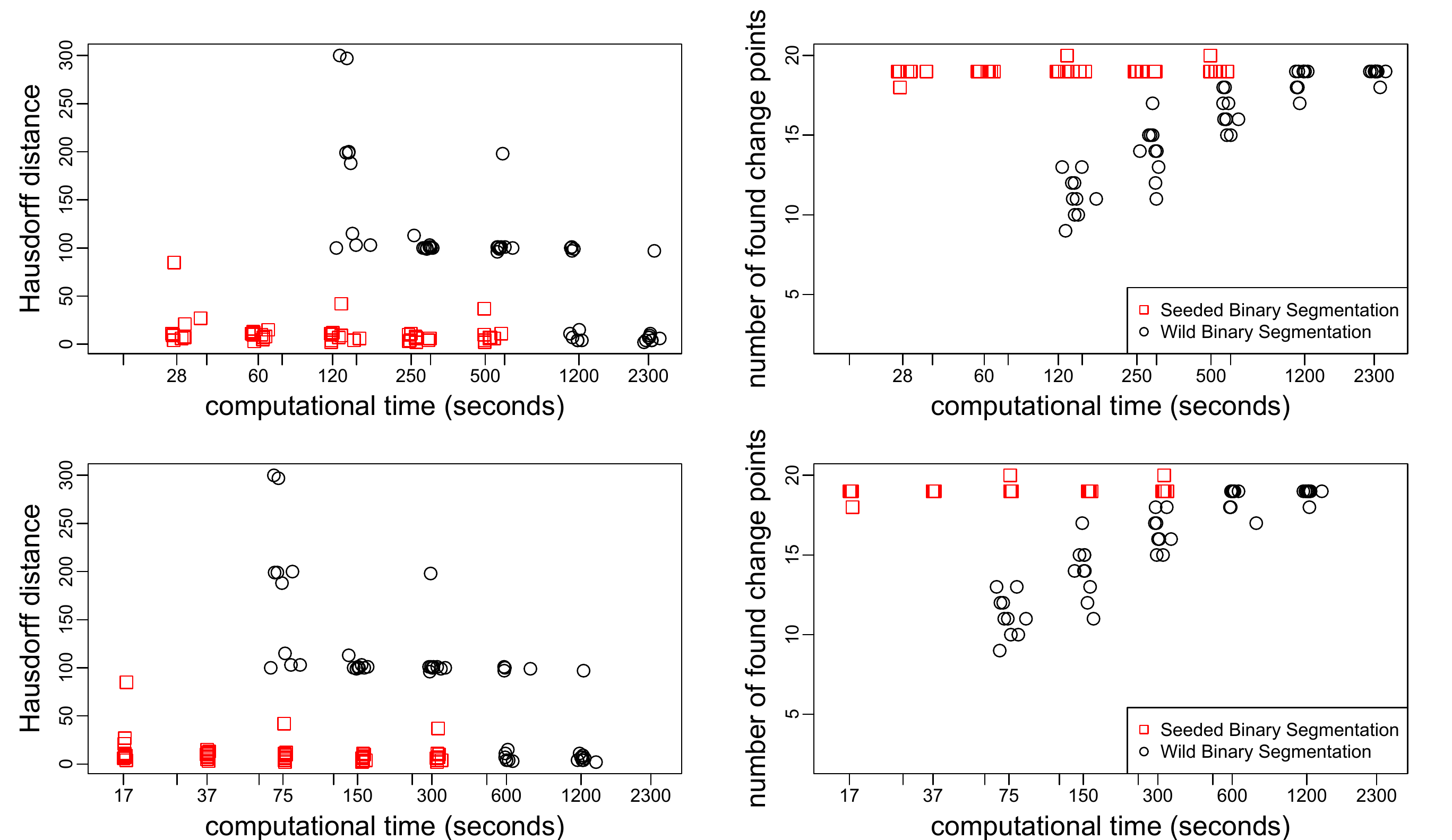}
\caption{Estimation performance on the changing Gaussian graphical model example of section \ref{highdim_example} with $2000$ observations and $19$ change points with the naive implementation (top) and the implementation using updates for neighbouring input covariance matrices (bottom). Note the logarithmic scales.}
\label{fig:glasso_example}
\end{center}
\end{figure}

\section{Conclusions}
Two popular algorithmic approaches tackling change point detection are modifications of dynamic programming and greedy procedures, mostly based on binary segmentation. While dynamic programming provides optimal partitionings, its worst case computational complexity is typically quadratic in the length of the time series. Moreover, dynamic programming often requires considerable work to be adapted to the problem at hand, in particular when aiming to do so in efficient ways, see e.g.~\cite{Killick_etal,Frick_etal,LMS16,Pein_etal,Haynes_2017_range_of_penalties, Haynes_2017_nonparametric_DP, Kernel_dynprog, Fearnhead_etal} for specific examples. Plain binary segmentation on the other hand is easy to adapt to various change point detection scenarios while being fast at the same time. The price to pay is not being optimal in terms of statistical estimation. The recently proposed wild binary segmentation \citep{Fryzlewicz_WBS} and related methods improve on statistical detection by means of considering random background intervals for the detection, but these methods lose the computational efficiency of plain binary segmentation. 

In this paper we showed that this loss of computational efficiency can be avoided by means of a systematic construction of background intervals. The so called seeded intervals have a total length of order $O(T \log T)$ compared to $O(TM)$ for wild binary segmentation with $M$ random intervals. The choice of $M$ should depend on the assumed minimal segment length which is typically unknown prior to the analysis. In frequent change point scenarios, $M$ should be clearly larger than $O(\log (T))$, in worst case $M$ is of order $O(T^2)$. The seeded interval construction is not only more efficient, but the total length of search intervals is independent of the number of change points, which even guarantees fast computations under all circumstances, unlike competing methods. Importantly, for this computational gain one does not need to sacrifice anything, i.e.~the improved statistical estimation performance remains as demonstrated in various simulation results. We have proved minimax optimality of our proposed seeded binary segmentation methodology for the univariate Gaussian change in mean problem, and demonstrated massive computational gains for detecting change points in large-dimensional Gaussian graphical models.

\section*{Acknowledgement}
Solt Kov\'acs and Peter B\"uhlmann were partially supported by the European Research Commission grant 786461 CausalStats - ERC-2017-ADG. Housen Li and Axel Munk gratefully acknowledge the support of the DFG Cluster of Excellence Multiscale Bioimaging EXC 2067.

\bibliographystyle{apalike}
\bibliography{ref}

\end{document}